\documentclass[11pt]{article}

\usepackage[margin=1in]{geometry}
\usepackage{setspace}
\onehalfspacing

\usepackage[T1]{fontenc}
\usepackage[utf8]{inputenc}
\usepackage{lmodern}
\usepackage{microtype}

\usepackage{amsmath,amssymb,amsthm,mathtools,bm}

\usepackage{graphicx}
\usepackage{tikz}
\usetikzlibrary{patterns}
\usetikzlibrary{decorations.pathreplacing}
\usepackage{booktabs}
\usepackage{caption}
\usepackage{subcaption}
\usepackage{float}
\usepackage{rotating} 

\usepackage{enumitem}
\usepackage{xcolor}

\usepackage[round,authoryear]{natbib}
\usepackage{hyperref}
\hypersetup{
  colorlinks=true,
  linkcolor=blue,
  citecolor=blue,
  urlcolor=blue
}
\usepackage{cleveref} 

\numberwithin{equation}{section}
\newtheorem{theorem}{Theorem}[section]
\newtheorem{proposition}[theorem]{Proposition}

\theoremstyle{definition}

\theoremstyle{remark}


\newcommand{\R}{\mathbb{R}}

\title{The Welfare Effects of Policy Signalling in a Regime Change Game}

\author{Georgy Lukyanov\thanks{Toulouse School of Economics, 1 Esp.\ de l'Universit\'e, Toulouse, 31000, France. Email: Georgy.Lukyanov@tse-fr.eu}}

\date{}

\begin{document}

\maketitle

\abstract{We explore a specific parametric example of a regime change game in which a policymaker defends the status quo against a continuum of atomistic agents who seek to overthrow it. At the start of the game, the policymaker can initiate a policy intervention to make an attack less appealing. When agents’ private information is relatively imprecise, signalling is beneficial to the policymaker when the fundamentals are strong but detrimental when the fundamentals are weak.}

\textbf{Keywords:} Global games, Signalling, Policymaker, Regime change

\textbf{JEL classification:} D82, D84, E58, F31

\medskip

\noindent
\begingroup
\setlength{\fboxsep}{8pt}\setlength{\fboxrule}{0.4pt}%
\fbox{%
  \begin{minipage}{0.85\textwidth}
  \small
  \textbf{Author's accepted manuscript (postprint)} of:\\
  Lukyanov, G. (2025). \emph{The welfare effects of policy signalling in a regime change game}. 
  \textit{Economic Theory Bulletin}. \textbf{Published online:} 17 June 2025.\\
  \textbf{Version of Record (VoR) DOI:}
  \href{https://doi.org/10.1007/s40505-025-00295-z}{10.1007/s40505-025-00295-z}.\\[2pt]
  \emph{This is a post–peer-review, pre-copyedit version of an article published in \textit{Economic Theory Bulletin}.
  The final authenticated version is available at the DOI above.}
  \end{minipage}%
}
\endgroup

\medskip

\section{Introduction}

Coordinated actions by large groups often drive major macroeconomic events, such as currency attacks, bank runs, stock market crashes, and political regime changes. In these scenarios, many individuals may disrupt the status quo—by selling short a pegged currency, triggering a bank run, or joining mass protests—which can force a central bank to devalue its currency, plunge a commercial bank into a liquidity crisis, or even topple a political leader.

Typically, these events feature numerous small players acting against a large entity—the policymaker—whose goal is to preserve the status quo. In this paper, we consider a sequential game in which the policymaker can, before a coordinated attack unfolds, undertake a costly action to deter such disruptions. For example, in anticipation of a currency attack, a central bank might raise interest rates to increase speculative borrowing costs; similarly, a government might bolster its security forces and tighten laws to deter protests.

An interesting twist arises when the policymaker possesses additional information about the regime’s resilience. Preventive actions may inadvertently signal this information to the public. For instance, a heavy police presence might lead citizens to infer that the political leader fears the regime’s potential collapse, provoking even more aggressive behavior.

In our model, all payoffs depend on the fundamental state $\theta$, which measures the regime’s strength. We focus on a signaling equilibrium in which the policymaker’s intervention $r$ influences the agents’ beliefs about $\theta$.  

Our main result (Proposition \ref{prop:central}) characterizes how the effect of raising $r$ on the policymaker’s welfare depends on the precision of the agents’ signals.  We show that: if the private signals are very precise (low noise), then higher $r$ uniformly reduces the policymaker’s welfare for *all* values of the fundamental.  By contrast, if the private signals are very noisy, then increasing $r$ hurts the policymaker when the fundamental $\theta$ is low (weak regime) but helps the policymaker when $\theta$ is high (strong regime).  Intuitively, with noisy signals a large $r$ can reassure agents under strong fundamentals and deter attacks, whereas with precise signals it only adds cost without improving beliefs. 

The rest of the paper is organized as follows. Section \ref{sec:model} describes the model.  Section \ref{sec:characterisation} provides the equilibrium characterization and compares welfare under different information precisions.  A discussion of related literature is deferred to Section \ref{sec:literature}, and Section \ref{sec:conclusion} concludes.

\section{\label{sec:model}The Model}

\subsection{Players.} The game is played by two types of players: a single \emph{policymaker} and a unit continuum (mass 1) of identical \emph{agents}. The policymaker’s action is a policy choice $r$, and each agent’s action is $a_i \in \{0,1\}$ (for example, $a_i=1$ may denote “attack” and $a_i=0$ “stay put”).

\subsection{Sequence of play.} The extensive-form timing is as follows:
\begin{enumerate}
  \item \textbf{Nature draws $\boldsymbol{\theta}$.} A fundamental state $\theta \in \mathbb{R}$ is drawn by Nature. We assume $\theta$ is distributed according to a uniform distribution over $\mathbb{R}$ (treated as an improper prior with respect to Lebesgue measure on $\mathbb{R}$).
  \item \textbf{Policymaker’s signal (policy choice).} The policymaker observes the realized $\theta$ and then chooses a policy $r \in \mathbb{R}_+$.
  \item \textbf{Agents’ information and actions.} Each agent $i$ observes the announced policy $r$ and receives a private signal $x_i \in \mathbb{R}$ about $\theta$ (but does not observe $\theta$ directly). Based on $r$ and $x_i$, the agent chooses an action $a_i \in \{0,1\}$.
  \item \textbf{Policymaker’s final move.} After observing the agents’ actions (e.g., the fraction of agents who chose $a_i=1$), the policymaker chooses a final decision $d \in \{0,1\}$.
\end{enumerate}

\subsection{Payoffs.} Each agent $i$ receives a payoff that depends on her action $a_i$, the fundamental $\theta$, and the policymaker’s final decision $d$. The interpretation is as follows: $a_i = 1$ denotes an “attack,” $d = 1$ denotes “regime change” (e.g., policy abandonment), and $\theta$ is the true strength of the regime. Specifically, the payoff to agent $i$ is: 
\[
u_i(a_i,\theta,d) = 
\begin{cases}
-r & \text{if } a_i = 1 \text{ and } d = 0,\\
0 & \text{if } a_i = 0,\\
1 - r & \text{if } a_i = 1 \text{ and } d = 1~,
\end{cases}
\] 
meaning that if the agent attacks and the regime survives ($d=0$), she suffers a loss. If she abstains ($a_i=0$), her payoff is normalized to zero. If she attacks and the regime falls ($d=1$), she earns a normalized success payoff of 1.

The policymaker observes the aggregate behavior (e.g., the fraction of agents choosing $a_i=1$) and chooses $d \in \{0,1\}$. Let $\alpha \in [0,1]$ denote the proportion of agents who attack. The policymaker’s payoff is: 
\[
U(r,d,\theta,\alpha) = (1-d)\big(\theta - \alpha\big) - c(r)\,,
\] 
where $c(r)$ is the cost of implementing policy $r$.  In particular we assume $c(r)$ is convex in $r$ and for concreteness take 
\[
c(r) \;=\; \frac{1}{2}\,(r - \underline{r})^2,
\] 

The policymaker’s decision $d$ depends on $\theta$ and the attack level $\alpha$ through her payoff $U(r,d,\theta,\alpha) = (1-d)(\theta-\alpha) - c(r)$.  In particular, higher $\theta$ or lower $\alpha$ make keeping the regime ($d=0$) more attractive.  Thus, we will see below that the equilibrium threshold for $\theta$ increases with $r$ and that a larger fraction of attackers $\alpha$ tends to trigger regime change ($d=1$).  

We analyze the game’s Perfect Bayesian Equilibrium.

\section{\label{sec:characterisation}Equilibrium Characterization}

Assume that the policymaker and the agents hold an improper uniform prior on the fundamental $\theta$ (i.e. $\theta \sim U(\mathbb{R})$). For analytical tractability, we maintain the quadratic cost specification $c(r) = \frac{1}{2}(r-\underline{r})^2$ introduced above.

We begin by describing the equilibrium in the speculators' game for the case where $r$ is set exogenously and hence is \emph{not} treated as a signal by the agents.

\subsection{Continuation game equilibrium for a given $r$}

Consider the continuation game with a given $r$. Each agent $i$ observes $r$ and receives a private signal 
$x_i = \theta + \epsilon_i$, where $\epsilon_i$ is uniformly distributed on $[-\sigma,\sigma]$ (so $\sigma>0$ measures the noise level).

Agents set their strategies 
\[
a: \R \times (0,1) \to \{0,1\}\,,
\] 
where $a(x_i,r)$ is the agent's rule for attacking (1) or not (0) given her private signal $x_i$ and the publicly observed policy $r$.

Since $x$ is positively correlated with $\theta$, a lower $x$ indicates weaker fundamentals and a higher chance that a coordinated attack will topple the regime. An agent’s dominant strategy is to attack if her signal $x_i$ is below some cutoff $\tilde{x}$; that is,
\[
a^*(x_i) = 
\begin{cases}
1, & \text{if } x_i \le \tilde{x},\\[1mm]
0, & \text{if } x_i > \tilde{x}\,.
\end{cases}
\] 
Because the policymaker will abandon the regime when the aggregate attack exceeds the fundamental, the probability of a successful attack is 
\[
A(\theta) = \Pr\{x_i \le \tilde{x} \mid \theta\}\,.
\] 
Because $A(\theta)$ is continuous, bounded, and decreasing in $\theta$, there is a unique fixed point $\tilde{\theta}$ of $A$, i.e. a $\tilde{\theta}$ satisfying $A(\tilde{\theta}) = \tilde{\theta}$.

This reasoning leads to an equilibrium for a given $r$, defined by the thresholds $(\tilde{x}(r), \tilde{\theta}(r))$. Proposition \ref{prop:thresholds} states:

\begin{proposition}\label{prop:thresholds}
For each $r \in \mathbb{R}_+$, there is a unique equilibrium in the continuation game characterized by thresholds $(\tilde{x}(r), \tilde{\theta}(r))$.

In equilibrium, each agent attacks if and only if her signal $x_i \le \tilde{x}(r)$, and the regime falls if and only if the fundamental $\theta \le \tilde{\theta}(r)$.  Solving the iterative threshold conditions yields explicit formulas.  In particular, we find 
\[
\tilde{x}(r) \;=\; (1+2\sigma)\,(1-r) - \sigma, 
\qquad 
\tilde{\theta}(r) \;=\; 1 - r,
\]
so that in equilibrium $\tilde{\theta}(r)=1-r$ as stated in the Proposition.
\end{proposition}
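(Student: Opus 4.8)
The plan is to solve the continuation game by backward induction: first pin down the policymaker's terminal decision, then the agents' best responses, and finally combine the two threshold conditions whose simultaneous solution delivers the stated formulas. At the last stage the cost $c(r)$ is already sunk, so the policymaker maximizes $(1-d)(\theta-\alpha)$ over $d\in\{0,1\}$; since she observes $\theta$, she keeps the regime ($d=0$) exactly when $\theta-\alpha>0$ and abandons it ($d=1$) when $\theta-\alpha<0$. Hence regime change occurs if and only if the attacking mass satisfies $\alpha\ge\theta$. This is precisely the rule encoded in $A(\theta)$: under a candidate cutoff strategy ``attack iff $x_i\le\tilde x$'', the realized attacking fraction at state $\theta$ equals $A(\theta)=\Pr\{x_i\le\tilde x\mid\theta\}$ by a law-of-large-numbers argument over the continuum, and the regime falls iff $A(\theta)\ge\theta$.

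First I would establish that the agents' best response is indeed a cutoff in $x_i$. An agent's payoff from attacking, net of the zero payoff from abstaining, is $\Pr\{d=1\mid x_i\}-r$, so she attacks iff her posterior probability of regime change weakly exceeds the cost $r$. Because $x_i=\theta+\epsilon_i$ is increasing in $\theta$ and the event $\{d=1\}=\{\theta\le\tilde\theta\}$ is a lower set, $\Pr\{\theta\le\tilde\theta\mid x_i\}$ is decreasing in $x_i$; this single-crossing property makes the optimal rule a threshold $\tilde x$, confirming the cutoff form asserted in the statement.

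Next I would derive the two equations that jointly determine $(\tilde x,\tilde\theta)$. The regime-change (critical-mass) condition is the fixed point $A(\tilde\theta)=\tilde\theta$ supplied by the excerpt. With $\epsilon_i\sim U[-\sigma,\sigma]$ the attacking fraction in the interior range is $A(\theta)=\frac{\tilde x-\theta+\sigma}{2\sigma}$, so $A(\tilde\theta)=\tilde\theta$ yields $\tilde x=(1+2\sigma)\tilde\theta-\sigma$. The marginal-agent (indifference) condition uses the posterior belief: under the improper uniform prior, $\theta\mid x_i\sim U[x_i-\sigma,\,x_i+\sigma]$, so $\Pr\{\theta\le\tilde\theta\mid x_i=\tilde x\}=\frac{\tilde\theta-\tilde x+\sigma}{2\sigma}$, and setting this equal to $r$ gives $\tilde x=\tilde\theta+\sigma(1-2r)$. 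Eliminating $\tilde x$ between the two linear relations collapses to $2\sigma\,\tilde\theta=2\sigma(1-r)$, i.e. $\tilde\theta(r)=1-r$, and back-substitution gives $\tilde x(r)=(1+2\sigma)(1-r)-\sigma$, as claimed. I would also record the interiority checks — that $\tilde\theta$ lies in $[\tilde x-\sigma,\tilde x+\sigma]$ and the marginal signal in the corresponding overlap range — so that the interior branches of $A$ and of the posterior are the relevant ones.

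The main obstacle is uniqueness rather than the algebra. For the fixed point itself I would invoke that $A(\cdot)$ is continuous, bounded in $[0,1]$, and strictly decreasing on the interior region, so it crosses the $45^\circ$ line exactly once, giving a unique $\tilde\theta$; combined with the strict monotonicity of $\Pr\{\theta\le\tilde\theta\mid x_i\}$ in $x_i$, the pair $(\tilde x,\tilde\theta)$ is unique among threshold profiles. The more delicate point is excluding non-monotone equilibria: here I would run the standard global-games iterated-dominance argument, showing that agents with sufficiently low signals have a dominant strategy to attack and those with sufficiently high signals a dominant strategy to abstain, and that iterating best responses inward from both ends converges to the same cutoff $\tilde x$, so no non-threshold equilibrium survives. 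This last step is where the continuum structure together with the uniform-noise and improper-prior specification does the real work, and I would treat it as the crux of the proof.
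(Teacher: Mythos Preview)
Your derivation is correct: the two linear threshold equations $\tilde x=(1+2\sigma)\tilde\theta-\sigma$ and $\tilde x=\tilde\theta+\sigma(1-2r)$ are exactly right under the uniform-noise and improper-prior specification, and they deliver $\tilde\theta(r)=1-r$ and $\tilde x(r)=(1+2\sigma)(1-r)-\sigma$ as stated. The uniqueness sketch via dominance regions and iterated elimination is the standard one and is adequate.

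The paper itself does not write out a proof. Its entire argument is a one-line appeal to the global-games literature (Theorem~1 of \citet{MorrisShin1998} and Proposition~A.1 of \citet{Edmond2013}), with the iterated-dominance intuition restated informally in the paragraph following the proposition. Your proposal is therefore not a different route so much as the route the paper gestures at: you are supplying the self-contained computation that the paper omits. What this buys you is that the explicit formulas are actually derived rather than asserted, and the interiority and monotonicity checks are made visible; what the paper's citation buys is brevity and an implicit guarantee that the argument has been vetted elsewhere.
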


\begin{proof}
This is a straightforward adaptation of standard results in the global-games literature. See, for example, Theorem 1 of \citet{MorrisShin1998} or the proof of Proposition A.1 in \citet{Edmond2013}.
\end{proof}

In the unique equilibrium that survives iterated elimination of conditionally dominated strategies, each agent attacks if and only if her signal is below $\tilde{x}(r)$, and the regime collapses if $\theta \le 1 - r$. This result follows from an iterative conditional dominance argument, which constructs two sequences of cutoff signals, $\{\tilde{x}_k\}$ and $\{\tilde{x}^k\}$, for the agents with the most optimistic and most pessimistic beliefs, and shows that both sequences converge to the same limit $\tilde{x}$. The existence of \emph{dominance regions} allows us to determine $\tilde{x}$ uniquely, thereby ensuring equilibrium uniqueness.

\subsection{Signaling by the Choice of Policy}

Next, consider the scenario in which agents anticipate that the policymaker’s choice of $r$ may depend on the realized $\theta$. In this case, speculators can infer $\theta$ from the observed policy $\bar{r}(\theta)$ and coordinate accordingly. Here, $\theta$ (observed by the policymaker) serves as the policymaker’s type, which is signaled to the agents through the choice of $r$.

Following \citet{Angeletosetal2006}, we construct an \emph{active policy equilibrium} in which the policymaker intervenes only for an intermediate range of fundamentals, adapting their Proposition 2 construction to our parametric setting.

Since active policy intervention is costly and is undertaken only if the policymaker intends to maintain the status quo, a higher $r$ signals that fundamentals exceed some lower threshold $\underline{\theta}$. Such common knowledge that the economy is not too weak thereby eliminates the lower dominance region where agents would otherwise prefer to attack.

Consequently, any policymaker who raises the policy to a level exceeding $\underline{r}$ completely removes the incentive to attack. In this parametric example, for any $r' \in (\underline{r}, \tilde{r}]$ (with $\tilde{r}$ satisfying $c(\tilde{r}) = \tilde{\theta}$, i.e.\ $\tfrac{1}{2}(\tilde{r}-\underline{r})^2 = 1 - \underline{r}$), there exist thresholds $\underline{\theta}(r')$ and $\overline{\theta}(r')$, along with a signal threshold $x'(r')$, given by:
\begin{flalign}
\label{eq:fortlow}
\underline{\theta}(r') &= \frac{1}{2}\big(r' - \underline{r}\big)^2,\\
\overline{\theta}(r') &= 2\sigma + \Big[1 - \frac{2\sigma}{\,1-\underline{r}\,}\Big]\underline{\theta}(r'),\\
x'(r') &= \overline{\theta}(r') + \sigma\big(2\,\underline{\theta}(r') - 1\big)\,. 
\end{flalign}
Here, $\underline{\theta}(r')$ is the minimum fundamental level that justifies intervention, $\overline{\theta}(r')$ is the level at which the policymaker is indifferent between intervening and not, and $x'(r')$ is the signal threshold triggering an attack.

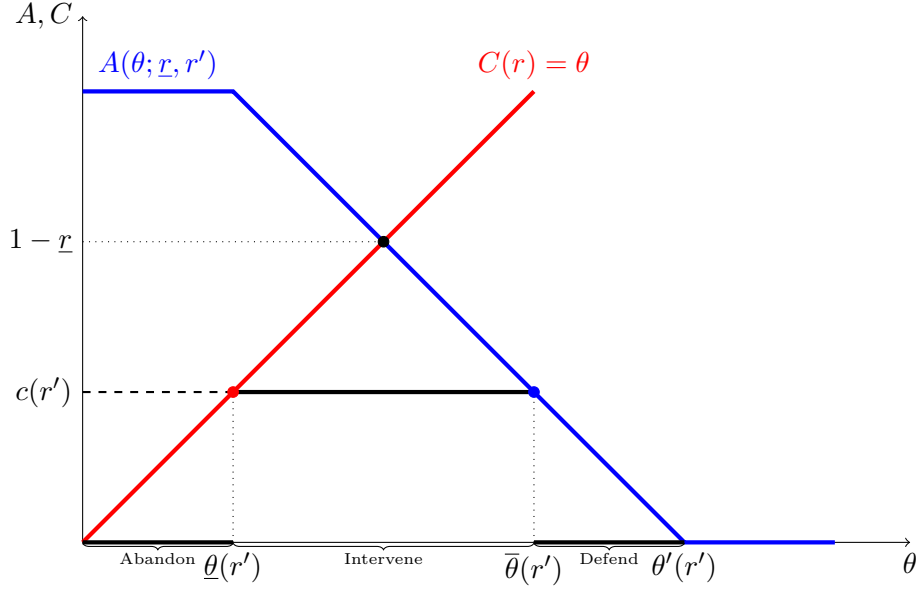
\begin{figure}
\begin{center}
\begin{tikzpicture}[scale=1]
\draw[<->] (0,7) node[left]{{$A,C$}} -- (0,0) -- (11,0) node[below]{{$\theta$}};
\draw[dotted] (0,4) node[left]{{$1-\underline{r}$}} -- (4,4);
\draw[ultra thick, blue] (0,6) -- node[above]{{$A(\theta;\underline{r},r')$}} (2,6) -- (8,0) -- (10,0);
\draw[ultra thick, red] (0,0) -- (6,6) node[above]{{$C(r) = \theta$}};
\draw[thick, dashed] (0,2) node[left]{{$c(r')$}} -- (6,2);
\draw[dotted] (2,0) node[below]{{$\underline{\theta}(r')$}} -- (2,2);
\draw[dotted] (6,0) node[below]{{$\overline{\theta}(r')$}} -- (6,2);
\node[below] at (8,0) {{$\theta'(r')$}};
\draw[ultra thick] (0,0) -- (2,0);
\draw[ultra thick] (2,2) -- (6,2);
\draw[ultra thick] (6,0) -- (8,0);
\draw [decorate, decoration={brace}] (6,-.01) -- node[below] {\tiny{Intervene}} (2,-.01);
\draw [decorate, decoration={brace}] (2,-.01) -- node[below] {\tiny{Abandon}} (0,-.01);
\draw [decorate, decoration={brace}] (8,-.01) -- node[below] {\tiny{Defend}} (6,-.01);
\draw[fill] (4,4) circle[radius=2pt];
\draw[fill,red] (2,2) circle[radius=2pt];
\draw[fill,blue] (6,2) circle[radius=2pt];
\end{tikzpicture}
\caption{Aggregate attack and the regions of intervention and non-intervention.}
\label{fig:actpol}
\end{center}
\end{figure}

Figure \ref{fig:actpol} illustrates the equilibrium. For $\theta \in [\,\underline{\theta}(r'),\,\overline{\theta}(r')\,]$, those types of policymakers intervene; outside this interval they do not. In particular, if the policymaker raises the policy to $r'$, agents infer that $\theta > \underline{\theta}(r')$ and coordinate on not attacking, so that $A=0$ for every $\theta$. The curve $A(\theta;\underline{r},r')$ represents the aggregate attack in the continuation game following \emph{no intervention} ($r=\underline{r}$) in the signaling equilibrium where the intervening policymaker is supposed to play $r = r'$.

The type with $\underline{\theta}(r')$ is indifferent between intervening and not. If that policymaker intervenes (raising the policy to $r' > \underline{r}$), they avoid an attack (receiving a payoff of $\theta$ rather than 0) at the cost $c(r') > 0$; these must balance to yield indifference. For $\theta > \underline{\theta}(r')$, the benefit of intervention increases with $\theta$ while the opportunity cost from the aggregate attack $A(\theta;\underline{r},r')$ decreases. At $\theta = \overline{\theta}(r')$, the policymaker is indifferent between intervening (incurring cost $c(r')$) and not intervening (facing a partial attack $A(\theta;\underline{r},r') > 0$, yielding a net payoff of $\theta - A(\theta;\underline{r},r')$).\footnote{Recall that no attack occurs if the policy is raised to $r'$; the policymaker will always defend the regime since $\overline{\theta} > A(\overline{\theta};\underline{r},r')$.}

Finally, define $\theta'(r')$ as the minimal $\theta$ for which $A(\theta;\underline{r},r') = 0$.

When agents observe no intervention ($r = \underline{r}$), they infer that $\theta$ is either very strong ($\theta > \overline{\theta}(r')$) or very weak ($\theta < \underline{\theta}(r')$). Their attack decision—determined by the threshold $x'(r')$—and the resulting aggregate attack $A(\theta;\underline{r},r')$ follow from these posterior beliefs given $\theta \notin [\,\underline{\theta}(r'),\,\overline{\theta}(r')\,]$.

A continuum of equilibria exists for $r' \in (\underline{r}, \tilde{r}]$, which are generally not Pareto-ranked. Nonetheless, some implications for the policymaker's ex post utility can be derived.

For the equilibrium policy level $r'$ played by types $\theta \in [\,\underline{\theta}(r'),\,\overline{\theta}(r')\,]$, the policymaker's ex post utility is 
\begin{equation}
U(\theta; r') = 
\begin{dcases}
0, & \text{if } \theta < \underline{\theta}(r'),\\
\theta - c(r'), & \text{if } \theta \in [\,\underline{\theta}(r'),\,\overline{\theta}(r')\,),\\
\theta - A(\theta;\underline{r},r'), & \text{if } \theta \ge \overline{\theta}(r')\,,
\end{dcases}
\end{equation}
where we recall that $\tilde{\theta}$ denotes the unique solution to $A(\theta) = \theta$ (as defined above).

Let $\tilde{r} := \underline{r} + \sqrt{2(1 - \underline{r})}$. For each $r' \in (\underline{r},\,\tilde{r}]$, define the policymaker’s strategy $\bar{r}_{r'}: \mathbb{R} \to \{\underline{r},\,r'\}$ by 
\[
  \bar{r}_{r'}(\theta) = 
  \begin{cases}
    r' & \text{if } \theta \in [\,\underline{\theta}(r'),\,\overline{\theta}(r')\,],\\[6pt]
    \underline{r}  & \text{otherwise}\,,
  \end{cases}
\] 
where $\underline{\theta}(r') < \overline{\theta}(r')$ are the threshold values defined above. Then define the corresponding signal cutoff for agents, $x'_{r'}$, such that an agent with private signal $x_i = x'_{r'}$ is indifferent between choosing $a_i=0$ and $a_i=1$ when the policy is $r'$.

We now compare the policymaker's payoff as a function of $r'$ — that is, across different equilibria corresponding to different $r'$ values. We find that when the noise in agents' private signals ($\sigma$) is sufficiently large, a more aggressive intervention (a higher $r'$) hurts a weak policymaker but benefits a strong one. In contrast, when $\sigma$ is small, a higher $r'$ uniformly harms all types of policymakers.

This result is formalized in the following proposition:

\begin{proposition}\label{prop:central}
Define $\theta'(r') := \min\{\theta \in \R:~A(\theta;\underline{r},r') = 0\}$. 
When $\sigma > \frac{\,1-\underline{r}\,}{\,2\,\underline{r}\,}$, the policymaker's ex post welfare $U(\theta; r')$ is \emph{decreasing} in $r'$ for 
\[
\theta \in [\,\underline{\theta}(r'),\,\overline{\theta}(r')\,]
\] 
and \emph{increasing} in $r'$ for 
\[
\theta \in [\,\overline{\theta}(r'),\,\theta'(r')\,]\,.
\] 
The welfare is unaffected by $r'$ for $\theta < \underline{\theta}(r')$ and $\theta > \theta'(r')$.

When $\sigma \le \frac{\,1-\underline{r}\,}{\,2\,\underline{r}\,}$, the policymaker's welfare is uniformly decreasing in $r'$ (for all $\theta$).
\end{proposition}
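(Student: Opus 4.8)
The plan is to exploit the piecewise form of $U(\theta;r')$ and differentiate each branch in $r'$ with $\theta$ held fixed, since the proposition is a statement of monotone comparative statics for fixed fundamentals. The two outer branches are immediate: for $\theta<\underline{\theta}(r')$ we have $U\equiv 0$, and for $\theta>\theta'(r')$ the aggregate attack vanishes by definition of $\theta'(r')$ together with the decreasing, piecewise-linear shape of $A$, so $U=\theta$; neither depends on $r'$. This disposes of the ``unaffected'' claims at once.

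For the intervention region $\theta\in[\,\underline{\theta}(r'),\overline{\theta}(r')\,]$ the relevant branch is $U=\theta-c(r')=\theta-\tfrac12(r'-\underline{r})^2$, so $\partial U/\partial r'=-(r'-\underline{r})<0$ for every $r'>\underline{r}$ and every $\sigma$. Thus $U$ is decreasing in $r'$ on this region unconditionally, which simultaneously covers the first bullet of the large-$\sigma$ case and the intervention part of the uniform-decrease case.

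The substantive work is in the defense region $\theta\in[\,\overline{\theta}(r'),\theta'(r')\,]$, where $U=\theta-A(\theta;\underline{r},r')$. First I would make $A$ explicit. Since $x_i=\theta+\epsilon_i$ with $\epsilon_i\sim U[-\sigma,\sigma]$ and the relevant cutoff is $x'(r')$, the aggregate attack is the linear piece $A(\theta;\underline{r},r')=\dfrac{x'(r')-\theta+\sigma}{2\sigma}$ throughout this interval; one checks $A=\underline{\theta}(r')=c(r')$ at $\theta=\overline{\theta}(r')$ and $A=0$ at $\theta=\theta'(r')=x'(r')+\sigma$, confirming that we remain on the interior downward-sloping branch for $r'\le\tilde r$ (equivalently $\underline{\theta}(r')=c(r')\le 1$). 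Holding $\theta$ fixed, this gives $\partial U/\partial r'=-\dfrac{1}{2\sigma}\,\dfrac{dx'}{dr'}$, so the sign of the welfare effect is governed entirely by the sign of $dx'/dr'$.

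The main computation, and the crux of the argument, is then to evaluate $dx'/dr'$ by chaining through the definitions. Substituting the formula for $\overline{\theta}(r')$ into $x'(r')=\overline{\theta}(r')+\sigma\big(2\underline{\theta}(r')-1\big)$ collapses it to $x'(r')=\sigma+\big[\,1+2\sigma-\tfrac{2\sigma}{1-\underline{r}}\,\big]\underline{\theta}(r')$, and since $\underline{\theta}'(r')=r'-\underline{r}>0$ we obtain $\dfrac{dx'}{dr'}=\big[\,1+2\sigma-\tfrac{2\sigma}{1-\underline{r}}\,\big](r'-\underline{r})$. The bracket simplifies to $1-\dfrac{2\sigma\,\underline{r}}{1-\underline{r}}$, which is negative exactly when $\sigma>\dfrac{1-\underline{r}}{2\underline{r}}$. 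Hence in that regime $dx'/dr'<0$, so $\partial U/\partial r'>0$ and $U$ is increasing in $r'$ on the defense region, whereas for $\sigma\le\dfrac{1-\underline{r}}{2\underline{r}}$ the bracket is nonnegative, $\partial U/\partial r'\le 0$, and $U$ is decreasing there as well, so that combined with the intervention region the decrease is uniform. The only delicate point to flag is that the branch boundaries $\underline{\theta}(r'),\overline{\theta}(r'),\theta'(r')$ themselves move with $r'$; the statement is to be read at a fixed $\theta$ lying in the indicated (moving) region, and since on each branch the $r'$-dependence enters only through $c(r')$ or $x'(r')$, the partial derivatives computed above are the correct comparative statics.
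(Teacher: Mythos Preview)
Your proposal is correct and follows essentially the same route as the paper: write $U(\theta;r')$ piecewise, dispose of the outer branches, and differentiate the two interior branches in $r'$, with the defense-region sign hinging on the same quantity $\tfrac{1}{2\sigma}-\tfrac{\underline{r}}{1-\underline{r}}$. The only cosmetic difference is that the paper reparametrizes by $\underline{\theta}(r')$ and differentiates in $\underline{\theta}$, whereas you chain through $x'(r')$; both reduce to the identical computation via $\underline{\theta}'(r')=r'-\underline{r}$.
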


\begin{proof}
See Appendix~\ref{app:prop2}.
\end{proof}

\noindent\textit{Discussion.} Figure \ref{fig:welfare} illustrates the welfare comparison in Proposition \ref{prop:central}. The solid black curve in Figure \ref{fig:welfare} depicts the policymaker’s payoff $U(\theta;r')$ under a given intervention $r'$ (this piecewise-linear curve reflects four regions: for $\theta < \underline{\theta}(r')$ the regime collapses with payoff 0; for $\underline{\theta}(r') \le \theta < \overline{\theta}(r')$ the policymaker intervenes and earns a positive payoff; for $\overline{\theta}(r') \le \theta < \theta'(r')$ the policymaker does not intervene but faces a partial attack; and for $\theta > \theta'(r')$ no attack occurs, yielding payoff $\theta$). The dashed red curve shows the payoff under a more aggressive policy $r'' > r'$. Consistent with Proposition~\ref{prop:central}, when private signals are \emph{noisy} (large $\sigma$), the red curve lies below the black curve at low $\theta$ (a weak policymaker incurs a higher cost and is worse off) but crosses above it at higher $\theta$ (a strong policymaker deters attacks and is better off with a higher $r$). By contrast, when signals are \emph{precise} (small $\sigma$), the red curve never exceeds the black curve, indicating that a higher $r$ uniformly harms the policymaker. This confirms that Figure~\ref{fig:welfare} and Proposition~\ref{prop:central} are fully consistent.

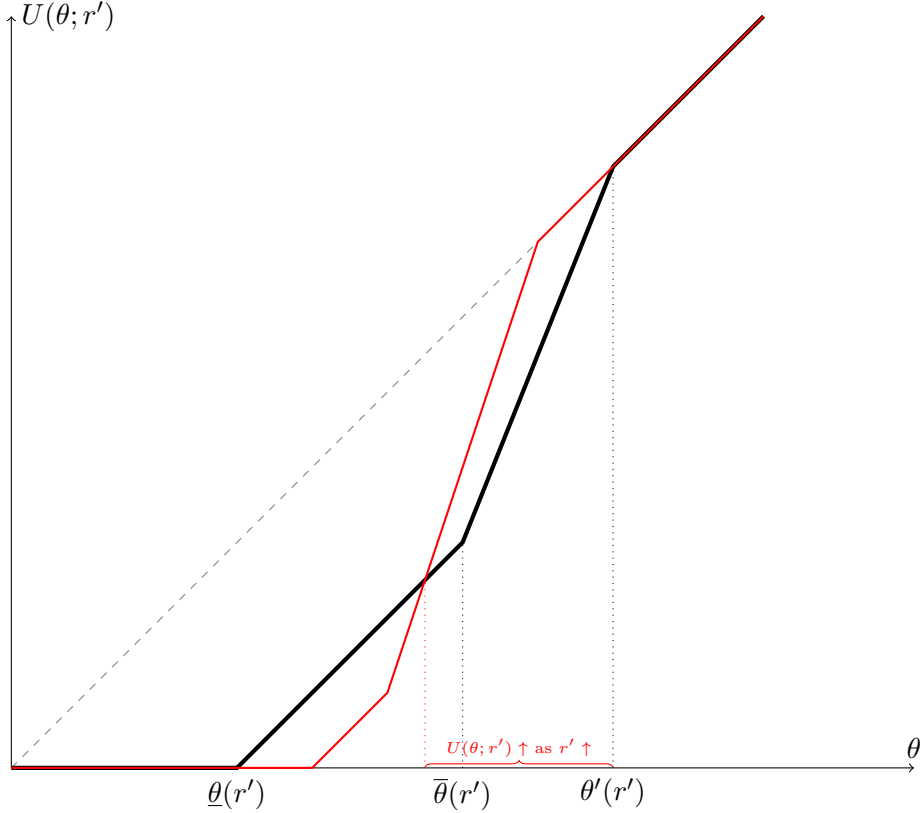
\begin{figure}
\begin{center}
\begin{tikzpicture}[scale=1]
\draw[<->] (0,10) node[right]{{$U(\theta;r')$}} -- (0,0) -- (12,0) node[above]{{$\theta$}};
\draw[dashed,gray] (0,0) -- (10,10);
\draw[dotted] (6,0) node[below]{{$\overline{\theta}(r')$}} -- (6,3);
\draw[dotted] (8,0) node[below]{{$\theta'(r')$}} -- (8,8);
\draw[ultra thick] (0,0) -- (3,0) node[below]{{$\underline{\theta}(r')$}} -- (6,3) -- (8,8) -- (10,10);
\draw[red,dotted] (5.5,0) -- (5.5,2.5);
\draw [red, decorate, decoration={brace}] (5.5,.01) -- node[above] {\tiny{$U(\theta;r')\uparrow$ as $r'\uparrow$}} (8,.01);
\draw[thick, red] (0,0) -- (4,0) -- (5,1) -- (7,7) -- (10,10);
\end{tikzpicture}
\end{center}
\caption{Policymaker's ex post payoff $U(\theta;r')$ as a function of $\theta$, illustrating Proposition~\ref{prop:central}. The \textbf{black solid line} shows $U(\theta;r')$ for a baseline policy $r'$, and the \textbf{red dashed line} shows the payoff under a higher policy $r'' > r'$. For $\sigma > \frac{\,1-\underline{r}\,}{\,2\,\underline{r}\,}$, the red curve lies below the black curve at low $\theta$ but crosses above it at high $\theta$, indicating that aggressive policy hurts the policymaker when fundamentals are weak but helps when fundamentals are strong.}
\label{fig:welfare}
\end{figure}

This result is central to the paper. It shows that when agents' signals are sufficiently noisy, a more aggressive policy ($r'$) harms the policymaker if fundamentals are weak (because intervention becomes costlier) but benefits the policymaker if fundamentals are strong (since a higher $r'$ deters attacks). Conversely, when agents' signals are precise ($\sigma$ is small), raising $r'$ uniformly reduces welfare.

The solid line in Figure \ref{fig:welfare} shows the policymaker's ex post payoff $U(\theta;r')$, given by a piecewise linear function. For $\theta < \underline{\theta}(r')$, the regime is abandoned and the payoff is zero. For $\theta \in [\,\underline{\theta}(r'),\,\overline{\theta}(r')\,]$, the policymaker intervenes and earns a positive payoff. For $\theta \in [\,\overline{\theta}(r'),\,\theta'(r')\,]$, the policymaker faces a positive attack without intervening; and for $\theta > \theta'(r')$, no attack occurs, so the payoff equals $\theta$. The red line corresponds to the payoff under an equilibrium with a higher $r'' > r'$. Increasing $r'$ raises the lower threshold $\underline{\theta}(r')$ but reduces $\overline{\theta}(r')$ and $\theta'(r')$. Those types for which the red line lies above the black line benefit from this increase in $r'$.

Intuitively, increasing $r'$ has a direct negative effect on the types who intervene ($\theta \in [\,\underline{\theta}(r'),\,\overline{\theta}(r')\,]$), as they must pay a higher cost $c(r')$. Yet a larger $r'$ also affects the aggregate attack under $r=\underline{r}$: when $\sigma$ is large, non-intervention makes agents \emph{less aggressive}, reducing $A(\theta;\underline{r},r')$ and thus benefiting strong types ($\theta \in (\overline{\theta}(r'),\,\theta'(r'))$).

A practical way to view these findings is to consider how a policymaker might misjudge the impact of raising $r'$ if they assume agents are naive. In a naive setting, a higher cost of attacking always seems beneficial, since it directly lowers the incentive to mount an attack. Yet when agents are sophisticated enough to infer fundamentals from the observed policy, the very act of setting a high $r'$ can reveal underlying weaknesses if those fundamentals are marginal. This “signal of fear” can invite more aggressive attacks when $\theta$ is not sufficiently strong, countering the policymaker’s goal.

In real-world scenarios, this phenomenon has been documented by \citet{Lorentzen2013}: heightened public security measures in authoritarian contexts can sometimes reveal a leader’s vulnerability, fueling rather than quelling dissent. Conversely, when fundamentals are indeed robust, a higher $r'$ reliably deters attacks by signaling genuine regime strength. Hence, the divergence between naive-agent expectations (“higher $r'$ always helps”) and rational-agent outcomes (“a high $r'$ can backfire if fundamentals are weak”) underscores the need for careful consideration of how observers draw inferences from policy choices.

\emph{Intuition with naive vs. sophisticated agents.} To build further intuition, consider how the impact of an aggressive policy $r'$ differs if agents are \textit{naive} rather than sophisticated. In a naive scenario (agents do not infer fundamentals from policy signals), raising $r'$ \textbf{always} seems beneficial to the policymaker — it directly increases the cost of attacking, which mechanically deters attacks. However, when agents are sophisticated and draw inferences from the observed $r'$, an aggressive intervention can send a \emph{negative signal} about fundamentals. In particular, setting a very high $r'$ when $\theta$ is only moderate may be interpreted as a sign of fear or underlying weakness. This “signal of fear” can invite more aggressive attacks if $\theta$ is not sufficiently high. On the other hand, if $\sigma$ is large (so agents’ private information is very noisy), a strong policy move can reassure agents when fundamentals are in fact high, thereby reducing attacks and benefiting a strong policymaker. In summary, naive agents focus only on the direct deterrence effect of policy (higher $r'$ always discourages attacks), whereas sophisticated agents also consider the informational content of policy — which can either quell or exacerbate attacks depending on the true state and the noise in their private signals.

\section{\label{sec:literature}Literature review}

We now situate our findings within the related literature.

The global-games approach, introduced by \citet{CarlssonvanDamme1993}, revolutionized equilibrium selection in simple $2\times 2$ coordination games by incorporating small, uncorrelated noise into players’ information, thereby yielding a unique equilibrium. Building on this framework, \citet{MorrisShin1998} applied global games to currency attacks, refining the complete-information analysis of \citet{Obstfeld1986}. In a key development, \citet{Angeletosetal2006} showed that policy signaling can reintroduce equilibrium multiplicity\footnote{They also demonstrate that this multiplicity persists under refined out-of-equilibrium beliefs in the spirit of \citet{ChoKreps1987}.} while still assuming that the policymaker is fully informed. In contrast, this paper examines the welfare implications of \emph{active policy} equilibria and explores situations where the policymaker may benefit from remaining uninformed.

Although our analysis focuses on a specific parametric example, it fits within the broader literature on global games with a large player. For instance, \citet{Corsettietal2004} study how a major speculator (e.g., George Soros) can trigger a currency attack that influences smaller speculators, while \citet{Edmond2013} examines a political leader who manipulates information about the regime’s strength, which citizens then rationally discount. In line with \citet{Angeletosetal2006}, our model admits an equilibrium where intervention occurs at intermediate fundamentals.

Research on \emph{multi-stage} revolutionary movements has further enriched our understanding of coordination under uncertainty. Early works by \citet{Kuran1989,Kuran1991} and \citet{Lohmann1994} explore how incomplete information, preference falsification, and cascading behavior shape the timing and likelihood of revolutions. More recent studies complement these insights: \citet{CasperTyson2014} analyze how localized protests can shift elite coordination in a coup, while \citet{Little2017} demonstrates how regimes deploy propaganda to deter collective action—paralleling the signaling rationale in the present framework.

Moreover, \citet{deMesquita2010} develops a formal model in which regime transitions hinge on how individual incentives align with collective revolutionary action, emphasizing the role of private information and threshold-based coordination. In our setting, we focus on how a policymaker’s uninformed status influences these thresholds, revealing that incomplete information can sometimes bolster the regime by mitigating adverse signaling. Complementary studies by \citet{TysonSmith2018} and \citet{Sudduth2017} highlight how strategic behavior and information asymmetries shape outcomes in mass mobilization and coup-risk scenarios.

Further contributions examine the effects of information precision and signaling on coordination. \citet{IachanNenov2015} investigate how the quality of private signals affects the probability of attacks and welfare outcomes in a static global game, while \citet{KyriazisLou2023} propose a signaling game in which a leader moves first and agents coordinate afterward, showing that noise can yield either unique or multiple rationalizable outcomes. In addition, \citet{AngeletosPavan2013} offer selection-free predictions in a generalized global-game model, illustrating how endogenous information structures determine equilibrium outcomes. Relative to these works, this paper focuses on the \emph{ex post} welfare of a policymaker whose information level shapes the signaling process and ultimately the equilibrium.

By allowing the policymaker to remain \emph{uninformed}, this paper departs from the standard assumption in \citet{Angeletosetal2006} and demonstrates that ignorance can sometimes shield the policymaker from the detrimental effects of signaling. This perspective enriches the literature on information structures, policy instruments, and equilibrium selection.

\section{\label{sec:conclusion}Conclusion}

This paper presents a simplified regime-change game in which a policymaker can preemptively undertake a costly intervention to deter a coordinated attack. Our analysis reveals that, when agents' private signals are imprecise, aggressive signaling through policy intervention has a double-edged effect: it benefits a strong policymaker by deterring attacks, while it harms a weak policymaker by imposing high intervention costs. In essence, the policymaker's detailed knowledge of the fundamentals may backfire if it intensifies the signaling channel and prompts adverse coordination among agents.

Although these results are derived from a stylized parametric example, they raise important questions about the optimal degree of knowledge a policymaker should have before initiating preventive actions.

\paragraph{Acknowledgments.} I am grateful to Christian Hellwig, Ilina Ilyasova, Konstantin Kozlov, the editor and two anonymous referees for their insightful comments and suggestions.

Any remaining errors are mine.

\paragraph{Conflict of Interest.}
The author declares that he has no known competing financial interests or personal relationships that could have appeared to influence the work reported in this paper.

\appendix
\section{Proofs}\label{approofs}

\subsection{Proof of Proposition \ref{prop:central}}\label{app:prop2}
\begin{proof}
The proof examines the policymaker's strategic response to potential attacks as a function of $\theta$ and $r'$. When $r = \underline{r}$, the aggregate attack is 
\begin{equation*}
A(\theta;\underline{r},r') = \Pr\{x \le x^* \mid \theta\} = 
\begin{dcases}
1, & \text{if } \theta < 2\sigma(\underline{\theta} - 1) + \overline{\theta},\\[1mm]
\underline{\theta} + \frac{\overline{\theta} - \theta}{2\sigma}, & \text{if } (2\sigma - 1)\underline{\theta} + \overline{\theta} \le \theta < 2\sigma\,\underline{\theta} + \overline{\theta},\\[1mm]
0, & \text{if } \theta \ge 2\sigma\,\underline{\theta} + \overline{\theta}\,.
\end{dcases}
\end{equation*}
Substituting the expression for $\overline{\theta}$ yields 
\begin{equation}
\label{eq:aggatt}
A(\theta;\underline{r},r') = 
\begin{dcases}
1, & \text{if } \theta < \Big[1 - 2\sigma\,\frac{\underline{r}}{\,1-\underline{r}\,}\Big]\underline{\theta},\\[1mm]
\frac{2\sigma + \Big[1 - 2\sigma\,\frac{\underline{r}}{\,1-\underline{r}\,}\Big]\underline{\theta} - \theta}{2\sigma}, & \text{if } \Big[1 - 2\sigma\,\frac{\underline{r}}{\,1-\underline{r}\,}\Big]\underline{\theta} \le \theta < 2\sigma + \Big[1 - 2\sigma\,\frac{\underline{r}}{\,1-\underline{r}\,}\Big]\underline{\theta},\\[1mm]
0, & \text{if } \theta \ge 2\sigma + \Big[1 - 2\sigma\,\frac{\underline{r}}{\,1-\underline{r}\,}\Big]\underline{\theta}\,. 
\end{dcases}
\end{equation}

For any given $r'\in(\underline{r},\tilde{r}]$, the signaling equilibrium is determined and thus the policymaker’s ex post welfare as a function of $\theta$ can be written piecewise.  Using $C(r')=\underline{\theta}(r')$ from \eqref{eq:fortlow}, we obtain
\begin{equation}
\label{eq:pmwelf}
U(\theta; r') =
\begin{dcases}
0, & \text{if } \theta < \underline{\theta}(r'),\\
\theta - \underline{\theta}(r'), & \text{if } \underline{\theta}(r') \le \theta < \overline{\theta}(r'),\\
\bigl(1 + \tfrac{1}{2\sigma}\bigr)\theta - \Bigl(\tfrac{1}{2\sigma}-\tfrac{\underline{r}}{1-\underline{r}}\Bigr)\underline{\theta}(r') - 1, & \text{if } \overline{\theta}(r') \le \theta < \theta'(r'),\\
\theta, & \text{if } \theta \ge \theta'(r').
\end{dcases}
\end{equation}
Here $\overline{\theta}(r')$ and $\theta'(r')$ are defined as in (7) above and depend on $r'$.  (See \eqref{eq:pmnewwlfr} for an alternative representation in terms of $\underline{\theta}$.) The welfare formula \eqref{eq:pmwelf} makes explicit how the policymaker’s payoff in each regime depends on $\theta$ and $r'$.

Observe that for $r' > \underline{r}$, $\underline{\theta}(r')$ is increasing in $r'$. Since $\underline{\theta}(r')$ is given by \eqref{eq:fortlow}, we have 
\begin{equation*}
\frac{\partial \underline{\theta}(r')}{\partial r'} = (r' - \underline{r}) > 0\,.
\end{equation*}

We can rewrite the policymaker's ex post welfare \eqref{eq:pmwelf} as\footnote{Note that the cutoff levels $\overline{\theta}$ and $\theta'$ also depend on $r'$ (equivalently on $\underline{\theta}$).} 
\begin{equation}
\label{eq:pmnewwlfr}
U(\theta; \underline{\theta}) = 
\begin{dcases}
0, & \text{if } \theta < \underline{\theta},\\[1mm]
\theta - \underline{\theta}, & \text{if } \underline{\theta} \le \theta < \overline{\theta},\\[1mm]
\Big(1 + \frac{1}{2\sigma}\Big)\theta - \Big(\frac{1}{2\sigma} - \frac{\underline{r}}{\,1-\underline{r}\,}\Big)\underline{\theta} - 1, & \text{if } \overline{\theta} \le \theta < \theta',\\[1mm]
\theta, & \text{if } \theta \ge \theta'\,,
\end{dcases}
\end{equation}
and therefore the sign of $\frac{\partial U(\theta;r')}{\partial r'}$ is the same as the sign of $\frac{\partial U(\theta; \underline{\theta})}{\partial \underline{\theta}}$.

From \eqref{eq:pmnewwlfr}, it is immediate that $\frac{\partial U(\theta; \underline{\theta})}{\partial \underline{\theta}} = -1 < 0$ for $\theta \in [\,\underline{\theta},\,\overline{\theta}\,]$.

For $\theta \in (\overline{\theta}, \theta')$, we have 
\begin{equation*}
\frac{\partial U(\theta; \underline{\theta})}{\partial \underline{\theta}} = -\Big(\frac{1}{2\sigma} - \frac{\underline{r}}{\,1-\underline{r}\,}\Big) \lesseqgtr 0 \quad \text{as} \quad \sigma \lesseqgtr \frac{\,1-\underline{r}\,}{\,2\,\underline{r}\,}\,,
\end{equation*}
implying that if $\sigma > \frac{\,1-\underline{r}\,}{\,2\,\underline{r}\,}$, then $U(\theta; \underline{\theta})$ is increasing in $\underline{\theta}$ — and thus $U(\theta; r')$ is increasing in $r'$. 
\end{proof}

\bibliographystyle{apalike}
\bibliography{WelfareCoord}

\end{document}